\newtheorem{theorem}{Theorem}
\newtheorem{lemma}{Lemma}
\title{The Set of Stable Matchings and the Core in a Matching Market with Ties and Matroid Constraints\thanks{%
This work was supported by JST ERATO Grant Number JPMJER2301, Japan.}}
\author{Naoyuki Kamiyama}
\date{Institute of Mathematics for Industry \\ Kyushu University \\ Fukuoka, Japan \\
{\ttfamily kamiyama@imi.kyushu-u.ac.jp}}
\begin{document}

\maketitle

\begin{abstract}
In this paper, we consider a many-to-one matching market where 
ties in the preferences of agents are allowed. 
For this market with capacity constraints, 
Bonifacio, Juarez, Neme, and Oviedo
proved some relationship between the set of 
stable matchings and the core.
In this paper, we consider 
a matroid constraint that is a generalization 
of a capacity constraint. 
We prove that the results proved by 
Bonifacio, Juarez, Neme, and Oviedo 
can be generalized to this setting.
\end{abstract} 

\section{Introduction}

Stability, 
which was introduced by
Gale and Shapley~\cite{GaleS62}, is 
one of the most central properties in a matching market. 
This property guarantees that 
there does not exist an unmatched pair such that 
both agents in this pair have incentive to deviate 
from the current matching. Furthermore, the core
is another important set of solutions in a matching market. 
We say that a matching $\mu$ is dominated by another matching $\sigma$
if there exists a coalition $C$ of agents such that 
(i) in $\sigma$, the agents in $C$
are matched to agents in $C$, and
(ii) 
the agents in $C$ prefer $\sigma$ to $\mu$.
Then the core is the set of matchings 
that are not dominated by any other matching. 
It is known that 
the set of stable matchings 
coincides with the core 
in a one-to-one matching market with strict preferences
(see, e.g., \cite[Appendix~2]{Roth85}). 

In this paper, we consider a matching market where 
ties in the preferences of agents are allowed, i.e., 
an agent may be indifferent between potential partners. 
For this matching market, 
Irving~\cite{Irving94}
introduced the following three definitions of stability. 
The first one is weak stability\footnote{In \cite{BonifacioJNO24}, 
a weakly stable matching in this paper is  
simply called a stable matching.}.  
This guarantees that 
there does not exist an unmatched pair of agents such that 
both agents prefer the other agent in this pair 
to the current partner. 
Irving~\cite{Irving94} proved 
that there always exists a weakly stable matching and
we can find a weakly stable matching 
in polynomial time by slightly modifying the algorithm 
of Gale and Shapley~\cite{GaleS62} for strict preferences. 
The second one is strong stability.
This guarantees that 
there does not exist an unmatched pair of agents such that
both agents weakly prefer\footnote{%
This means that both agent prefers the other agent in this pair to the current partner, or 
they are indifferent between the other agent in this pair and the current partner.} 
the other agent in this pair 
to the current partner, and 
at least one of the agents prefers the other agent in this pair 
to the current partner.
The last one is super-stability.
This guarantees
that there does not exist an unmatched pair of agents such that 
both agents weakly prefer the other agent in this pair 
to the current partner.
It is known that 
a super-stable matching and a strongly stable matching 
may not exist~\cite{Irving94}. 
Irving~\cite{Irving94} proved that,
in the one-to-one setting, 
we can check the existence of a strongly stable matching and 
a super-stable matching in polynomial time
(see also \cite{Manlove99}). 
Irving, Manlove, and Scott~\cite{IrvingMS00}
proved that, in the many-to-one setting
with capacity constraints, we can check the existence of 
a super-stable matching in polynomial time.
Furthermore,  
Irving, Manlove, and Scott~\cite{IrvigMS03}
and Kavitha, Mehlhorn, Michail, and Paluch~\cite{KavithaMMP07}
proved that, in the many-to-one setting
with capacity constraints, 
we can check the existence of 
a strongly stable matching in polynomial time. 

Bonifacio, Juarez, Neme, and Oviedo~\cite{BonifacioJNO24}
considered the relationship between the set of stable matchings 
and the core in a many-to-one matching market with ties 
and capacity constraints.
Bonifacio, Juarez, Neme, and Oviedo~\cite{BonifacioJNO24}
first introduced the following three definitions of the core. 
The first one is the weak core\footnote{%
In \cite{BonifacioJNO24}, the weak core in this paper is 
simply called the core.}.
This is the set of matchings such that 
there does not exist a coalition of agents such that the agents 
in this coalition prefer the new matching.
The second one is the strong core.
This is the set of matchings such that 
there does not exist a coalition of agents such that the agents 
in this coalition weakly prefer the new matching, and 
at least one of the agents in this coalition prefers the new matching.
The last one is the super core.
This is the set of matchings such that 
there does not exist a coalition of agents such that the agents 
in this coalition weakly prefer the new matching, and 
at least one of the agents in this coalition is matched to different 
agents in the new matching.
Then Bonifacio, Juarez, Neme, and Oviedo~\cite{BonifacioJNO24}
proved that, under the assumption that the preferences are responsive,  
the set of weakly 
stable matchings is contained in the weak core, 
the set of strongly stable matchings coincides with 
the strong core, and 
the set of super-stable matchings coincides with the 
super core. 

In this paper, we consider the relationship between the set of stable matchings and 
the core in a matching market with 
ties and matroid constraints.
Matroids can express not only capacity constraints 
but also more complex constraints including hierarchical 
capacity constraints.
Thus, matroid constraints are important from not only the theoretical 
viewpoint but also the practical viewpoint, and 
the stable matching problem with matroid constraints
has been actively 
studied (see, e.g., 
\cite{Fleiner03,FleinerK16,Kamiyama15,Kamiyama19,Kamiyama20,Kamiyama22,Kamiyama22b}).
In this paper, we prove that the results proved in \cite{BonifacioJNO24} 
can be generalized to this setting.

\section{Preliminaries} 

For each set $X$ and each element $x$, we define 
$X + x := X \cup \{x\}$ and 
$X - x := X \setminus \{x\}$. 
For each set $X$, we define 
$X - \emptyset := X$. 
For each positive integer $z$, we define $[z] := \{1,2,\dots,z\}$.

\subsection{Matroids} 

A pair ${\bf M}= (U,\mathcal{I})$ 
of a finite set $U$ and a non-empty family $\mathcal{I}$
of subsets of $U$ 
is called a \emph{matroid} if, for
every pair of subsets $I,J \subseteq U$, the following conditions are satisfied. 
\begin{description}
\item[(I1)]
If $I \subseteq J$ and $J \in \mathcal{I}$, then
$I \in \mathcal{I}$.  
\item[(I2)]
If $I, J \in \mathcal{I}$ and 
$|I| < |J|$, then 
there exists an element $u \in J \setminus I$ such that 
$I + u \in \mathcal{I}$. 
\end{description} 

Let ${\bf M} = (U, \mathcal{I})$ be a matroid. 
Then 
an element in $\mathcal{I}$ is called an 
\emph{independent set of ${\bf M}$}.
An inclusion-wise maximal independent set of ${\bf M}$ 
is called a \emph{base of ${\bf M}$}. 
Then (I2) implies that 
all the bases of ${\bf M}$ have the same size. 
For every independent set $I$ of ${\bf M}$, 
if there exists a base $B$ of ${\bf M}$ such that 
$|I| = |B|$, then $I$ is a base of ${\bf M}$. 
A subset $X \subseteq U$ 
such that $X \notin \mathcal{I}$ 
is called a \emph{dependent set of ${\bf M}$}. 
An inclusion-wise minimal dependent set of ${\bf M}$ 
is called a \emph{circuit of ${\bf M}$}. 

\begin{lemma}[{See, e.g., \cite[Lemma 1.1.3]{Oxley11}}] \label{lemma:Oxley11:lemma_1_1_3}
Let ${\bf M}$ be a matroid, and 
let $C_1,C_2$ be distinct circuits of ${\bf M}$ such that 
$C_1 \cap C_2 \neq \emptyset$. 
Then for every element $u \in C_1 \cap C_2$, 
there exists a circuit $C$ of ${\bf M}$ 
such that $C \subseteq (C_1 \cup C_2) - u$.
\end{lemma}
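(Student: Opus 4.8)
The plan is to argue by contradiction using only the independence axioms (I1) and (I2) together with the definition of a circuit. Suppose the conclusion fails for some $u \in C_1 \cap C_2$, i.e.\ $(C_1 \cup C_2) - u$ contains no circuit; since every dependent set contains a circuit, this forces $(C_1 \cup C_2) - u \in \mathcal{I}$. I will then enlarge a carefully chosen independent subset of $C_1 \cup C_2$ by repeated application of (I2) until it is forced to contain all of $C_1$ or all of $C_2$, contradicting (I1).

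First I would record the routine structural facts. Because $C_1$ and $C_2$ are distinct circuits, minimality gives $C_1 \setminus C_2 \neq \emptyset$ and $C_2 \setminus C_1 \neq \emptyset$; in particular $|C_1 \cup C_2| \geq |C_2| + 1$. Fix $e \in C_2 \setminus C_1$. Then $C_2 - e \in \mathcal{I}$ since $C_2$ is a circuit, and, since $u \in C_1$ while $e \notin C_1$, we have $e \neq u$, so $u \in C_2 - e$ and $e \in (C_1 \cup C_2) - u$.

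Next I would take $I := C_2 - e$ and $J := (C_1 \cup C_2) - u$, both independent, with $|I| = |C_2| - 1 < |C_1 \cup C_2| - 1 = |J|$. Applying (I2) repeatedly, I enlarge $I$ by elements of $J$, staying within $\mathcal{I}$, until I reach $I^{*} \in \mathcal{I}$ with $|I^{*}| = |J|$. By construction $C_2 - e \subseteq I^{*} \subseteq (C_2 - e) \cup J = C_1 \cup C_2$, and $|I^{*}| = |C_1 \cup C_2| - 1$, so $I^{*}$ omits exactly one element $y$ of $C_1 \cup C_2$. Since $u \in C_2 - e \subseteq I^{*}$ we have $y \neq u$, and since $C_2 - e \subseteq I^{*}$ we have $y \notin C_2 - e$; hence $y \in (C_1 \setminus C_2) \cup \{e\}$. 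If $y = e$ then $C_1 \subseteq (C_1 \cup C_2) - e = I^{*}$, and if $y \in C_1 \setminus C_2$ then $C_2 \subseteq (C_1 \cup C_2) - y = I^{*}$; either way $I^{*} \in \mathcal{I}$ contains a circuit, contradicting (I1). Thus $(C_1 \cup C_2) - u$ must be dependent and therefore contains a circuit $C$, as required.

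The argument has no deep step; the one thing to get right is the bookkeeping showing that the single omitted element $y$ can be neither $u$ nor any element of $C_2 \setminus \{e\}$, which is exactly what forces $I^{*}$ to swallow one of the two circuits. (If one is willing to invoke submodularity of the matroid rank function $r(X) := \max\{|I| : I \subseteq X,\ I \in \mathcal{I}\}$, a shorter route is available: $r(C_1 \cap C_2) = |C_1 \cap C_2|$ because $C_1 \cap C_2$ is a proper subset of the circuit $C_1$, so $r(C_1 \cup C_2) \leq r(C_1) + r(C_2) - r(C_1 \cap C_2) = |C_1 \cup C_2| - 2$, whence $(C_1 \cup C_2) - u$ has rank strictly below its cardinality and is dependent. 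I would nonetheless prefer the direct argument above, since the rank function is not introduced in the paper.)
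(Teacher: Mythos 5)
Your proof is correct. Note that the paper itself gives no proof of this lemma---it is quoted as a known fact with a citation to Oxley---so there is nothing to diverge from; your argument is essentially the standard textbook derivation of circuit elimination from the independence axioms (grow the independent set $C_2 - e$ inside the putatively independent set $(C_1\cup C_2)-u$ via (I2) until only one element of $C_1\cup C_2$ is missing, then observe that the result must contain $C_1$ or $C_2$, contradicting (I1)), and the bookkeeping on the omitted element $y$ is handled correctly.
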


Let ${\bf M} = (U, \mathcal{I})$ be a matroid,
and let $I$ be an independent set of ${\bf M}$.  
Then for every element $u \in U \setminus I$ such that 
$I + u \notin \mathcal{I}$, 
(I1) implies that 
$I + u$ contains a circuit $C$ of ${\bf M}$ as a subset, and 
$u \in C$.
It follows from Lemma~\ref{lemma:Oxley11:lemma_1_1_3} that 
such a circuit is uniquely determined.
This circuit is called the  
\emph{fundamental circuit of $u$ with respect to $I$ and ${\bf M}$}. 
For each element $u \in U \setminus I$ such that 
$I + u \notin \mathcal{I}$,
let ${\sf C}_{\bf M}(u,I)$ denote  
the fundamental circuit of $u$ with respect to $I$ and ${\bf M}$. 
It is known that, 
for every element $u \in U \setminus I$ such that 
$I + u \notin \mathcal{I}$, 
${\sf C}_{{\bf M}}(u,I)$ is the set of 
elements $v \in I  + u$ such that  
$I + u - v \in \mathcal{I}$
(see, e.g., \cite[p.20, Exercise 5]{Oxley11}). 
For each element $u \in U \setminus I$ such that 
$I + u \notin \mathcal{I}$,
we define ${\sf D}_{{\bf M}}(u,I) := {\sf C}_{{\bf M}}(u,I) - u$. 

Let ${\bf M} = (U,\mathcal{I})$ be a matroid.
For each subset $X \subseteq U$, 
we define $\mathcal{I}|X := \{I \in \mathcal{I} \mid I \subseteq X\}$ 
and 
${\bf M}|X := (X, \mathcal{I}|X)$. 
Then it is known that 
${\bf M}|X$ is a matroid
for 
every subset $X \subseteq U$ (see, e.g., \cite[p.20]{Oxley11}).

\subsection{Setting} 

Throughout this paper, we are given a finite simple undirected bipartite graph 
$G = (V,E)$. 
We assume that 
the vertex set $V$ of $G$ is partitioned into 
$D, H$, and every edge in the edge set $E$ of $G$ connects a vertex in 
$D$ and a vertex in $H$.
We call a vertex in $D$ (resp.\ $H$) a 
\emph{doctor} (resp.\ \emph{hospital}). 
For each doctor $d \in D$ and each hospital $h \in H$, if there exists an edge 
connecting $d$ and $h$, then $(d,h)$ denotes this edge. 
For each subset $F \subseteq E$ and each doctor $d \in D$ (resp.\ hospital 
$h \in H$), 
we define $F(d)$ (resp.\ $F(h)$) as the set of edges $(d^{\prime},h^{\prime}) \in F$
such that $d^{\prime} = d$ (resp.\ $h^{\prime} = h$).  

For each hospital $h \in H$, we are given a matroid 
${\bf M}_h = (E(h),\mathcal{I}_h)$ such that, for every 
edge $e \in E(h)$, we have $\{e\} \in \mathcal{I}_h$.   
For each doctor $d \in D$, we are given a transitive 
and complete\footnote{%
For every pair of elements $e,f \in E(d) \cup \{\emptyset\}$, 
at least one of $e \succsim_d f$ and $f \succsim_d e$ holds.} 
binary relation $\succsim_d$ on $E(d) \cup \{\emptyset\}$ 
such that, for every edge $e \in E(d)$, 
we have $e \succsim_d \emptyset$ and $\emptyset \not\succsim_d e$. 
For each hospital $h \in H$, we are given a 
transitive and complete binary relation $\succsim_h$ on $\mathcal{I}_h$ such that, 
for every independent set $I \in \mathcal{I}_h$, we have 
$I \succsim_h \emptyset$ and $\emptyset \not\succsim_h I$.
For each doctor $d \in D$ and 
each pair of elements $e,f \in E(d) \cup \{\emptyset\}$, 
we write 
$e \succ_d f$ (resp.\ $e \sim_d f$) 
if $e \succsim_d f$ and $f \not\succsim_d e$
(resp.\ $e \succsim_d f$ and $f \succsim_d e$). 
For each hospital $h \in H$ and 
each pair of independent sets $I,J \in \mathcal{I}_h$, 
we write 
$I \succ_h J$ (resp.\ $I \sim_h J$) 
if $I \succsim_h J$ and $J \not\succsim_h I$
(resp.\ $I \succsim_h J$ and $J \succsim_h I$). 
Furthermore, for each hospital $h \in H$ and 
each pair of edges $e,f \in E(h)$, 
we write $e \succsim_h f$, 
$e\succ_h f$, and $e \sim_h f$ 
instead of $\{e\} \succsim_h \{f\}$, 
$\{e\} \succ_h \{f\}$, and 
$\{e\} \sim_h \{f\}$, respectively. 

In this paper, we assume that, for every hospital $h \in H$, 
$\succsim_h$ is \emph{responsive}. 
More precisely, for every hospital $h \in H$, 
we assume that
$\succsim_h$ satisfies 
the following conditions. 
\begin{itemize}
\item
For every independent set $I \in \mathcal{I}_h$ 
and every edge $e \in I$, we have 
$I \succ_h I - e$. 
\item
For every independent set $I \in \mathcal{I}_h$ and every 
pair of edges $e \in E(h) \setminus I$ 
and $f \in I$ such that 
$I + e - f \in \mathcal{I}_h$,  
$I + e - f \succsim_h I$ if and only if $e \succsim_h f$. 
\end{itemize}

\begin{lemma}
Let $h$ be a hospital in $H$.
Then for every independent set $I \in \mathcal{I}_h$ and every 
pair of edges $e \in E(h) \setminus I$ 
and $f \in I$ such that 
$I + e - f \in \mathcal{I}_h$,  
$I + e - f \succ_h I$ {\normalfont (}resp.\ $I + e - f \sim_h I${\normalfont )} 
if and only if $e \succ_h f$ 
{\normalfont (}resp.\ $e \sim_h f${\normalfont )}. 
\end{lemma}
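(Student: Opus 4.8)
The plan is to obtain the statement directly from the second responsiveness condition by invoking it twice: once for $I$ with the swap $(e,f)$, and once for the independent set $J := I + e - f$ with the reverse swap $(f,e)$. Since $\succ_h$ and $\sim_h$ are defined purely in terms of $\succsim_h$ (namely $a \succ_h b$ iff $a \succsim_h b$ and $b \not\succsim_h a$, and $a \sim_h b$ iff $a \succsim_h b$ and $b \succsim_h a$), it suffices to characterise both of the relations $I + e - f \succsim_h I$ and $I \succsim_h I + e - f$ in terms of how $e$ and $f$ compare under $\succsim_h$.

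First I would apply the second responsiveness condition to $I$ with the pair $e \in E(h) \setminus I$ and $f \in I$: since $I + e - f \in \mathcal{I}_h$ by hypothesis, this gives $I + e - f \succsim_h I$ if and only if $e \succsim_h f$. Next I would note that $J := I + e - f$ is an independent set of ${\bf M}_h$, that $e \in J$, that $f \in E(h) \setminus J$ (here one uses that $G$ is simple, so $e \neq f$, hence $f \notin I - f \cup \{e\} = J$, while $f \in E(h)$ because $f \in I \subseteq E(h)$), and that $J + f - e = I \in \mathcal{I}_h$. Thus the swap $(f,e)$ is admissible for $J$, and the second responsiveness condition applied to $J$ yields $I = J + f - e \succsim_h J$ if and only if $f \succsim_h e$; that is, $I \succsim_h I + e - f$ if and only if $f \succsim_h e$.

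Combining the two equivalences: $I + e - f \succ_h I$ holds iff $I + e - f \succsim_h I$ and $I \not\succsim_h I + e - f$, which by the above is iff $e \succsim_h f$ and $f \not\succsim_h e$, i.e.\ iff $e \succ_h f$; and $I + e - f \sim_h I$ holds iff $I + e - f \succsim_h I$ and $I \succsim_h I + e - f$, which is iff $e \succsim_h f$ and $f \succsim_h e$, i.e.\ iff $e \sim_h f$. This establishes both parts. I do not expect any substantive obstacle; the only step needing care is the verification that the reverse swap $(f,e)$ is admissible for $J = I + e - f$ — in particular that $f \notin J$, which is where simplicity of $G$ (equivalently $e \neq f$) is used — so that the second responsiveness condition may legitimately be invoked a second time.
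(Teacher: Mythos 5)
Your proposal is correct and follows essentially the same route as the paper: the paper also applies the second responsiveness condition to $J := I + e - f$ with the reverse swap $(f,e)$ to obtain $I \succsim_h I+e-f$ iff $f \succsim_h e$, and combines this with the given equivalence $I+e-f \succsim_h I$ iff $e \succsim_h f$. Your extra check that the reverse swap is admissible (in particular $f \notin J$) is a detail the paper leaves implicit.
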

\begin{proof}
Define $J := I + e - f$. 
Then $J + f - e = I \in \mathcal{I}_h$.
Thus, the definition of $\succsim_h$ implies that 
$J + f - e \succsim_h J$ (i.e., 
$I \succsim_h I + e - f$) if and only if 
$f \succsim_h e$. 
This completes the proof.
\end{proof} 

A subset $\mu \subseteq E$ is called a \emph{matching in $G$} 
if the following conditions are satisfied. 
\begin{itemize}
\item
For every doctor $d \in D$, we have $|\mu(d)| \le 1$. 
\item
For every hospital $h \in H$, we have $\mu(h) \in \mathcal{I}_h$.
\end{itemize} 
For each matching $\mu$ in $G$ and each 
doctor $d \in D$ such that $\mu(d) \neq \emptyset$, 
we do not distinguish $\mu(d)$ and the 
unique element in $\mu(d)$. 

If, for each hospital $h \in H$, we are given a positive integer $c_h$ and 
we define $\mathcal{I}_h$ as the family of 
subsets $F \subseteq E(h)$ such that $|F| \le c_h$, then 
our setting coincides with the setting considered in \cite{BonifacioJNO24}. 
This kind of matroid is called a \emph{uniform matroid}.
In addition, we give another example of our model. 
For each hospital $h \in H$, we are given a  
family $\mathcal{P}_h$ of subsets of $E(h)$
such that, 
for every pair of distinct elements $X,Y \in \mathcal{P}_h$, 
$X \cap Y = \emptyset$ or  
$X \subseteq Y$ or $Y \subseteq X$
(i.e., $\mathcal{P}_h$ is a \emph{laminar} family). 
In addition, for each hospital $h \in H$ and each 
element $P \in \mathcal{P}_h$, we are given 
a positive integer $c_h(P)$.
Define 
$\mathcal{I}_h$ as the family of subsets $F \subseteq E(h)$ 
such that,
for every element $P \in \mathcal{P}_h$, 
$|F \cap P| \le c_ h(P)$. 
Then it is known that the ordered pair 
${\bf M}_h = (E(h), \mathcal{I}_h)$ defined in this way is a matroid. 
This kind of constraint was considered in, e.g., \cite{Huang10}. 

\subsection{Stable matchings} 

Let $\mu$ be a matching in $G$. 
Let $e = (d,h)$ be an edge in $E \setminus \mu$. 
Then we say that $e$ \emph{weakly blocks} 
(resp.\ \emph{strongly blocks}) 
$\mu$ on $d$ if 
$e \succsim_d \mu(d)$
(resp.\ $e \succ_d \mu(d)$). 
We say that $e$ \emph{weakly blocks} 
(resp.\ \emph{strongly blocks})
$\mu$ on $h$ if at least one of 
the following conditions is satisfied. 
\begin{itemize}
\item
$\mu(h) + e \in \mathcal{I}_h$.
\item
$\mu(h) + e \notin \mathcal{I}_h$, and 
there exists an edge $f \in {\sf D}_{{\bf M}_h}(e,\mu(h))$ such that 
$e \succsim_h f$ 
(resp.\ $e \succ_h f$). 
\end{itemize}
We say that 
$e$ \emph{strongly blocks} $\mu$ if 
$e$ strongly blocks $\mu$ on both $d$ and $h$. 
We say that 
$e$ \emph{weakly blocks} $\mu$ if 
$e$ weakly blocks $\mu$ on both $d$ and $h$, 
and $e$ strongly blocks $\mu$ on at least one of 
$d$ and $h$.
We say that 
$e$ \emph{super weakly blocks} $\mu$ if 
$e$ weakly blocks $\mu$ on both $d$ and $h$. 

Let $\mu$ be a matching in $G$.
Then $\mu$ is said to be 
\emph{weakly stable} if 
any edge in $E \setminus \mu$ does not strongly block $\mu$. 
Let $\mathcal{S}$ denote the set of weakly stable matchings in $G$.
Furthermore, $\mu$ is said to be 
\emph{strongly stable} if 
any edge in $E \setminus \mu$ does not weakly block $\mu$. 
Let $\mathcal{SS}$ denote the set of strongly stable matchings in $G$.
Finally, $\mu$ is said to be 
\emph{super-stable} if 
any edge in $E \setminus \mu$ does not super weakly block $\mu$. 
Let $\mathcal{SSS}$ denote the set of super-stable matchings in $G$.

\subsection{Core} 

For each matching $\mu$ in $G$ and each non-empty subset $C \subseteq V$, 
$\mu$ is said to be \emph{consistent with $C$} if, 
for every vertex $v \in C$, $\mu(v) \subseteq E(C)$, 
where $E(C)$ denotes the set of edges $(d,h) \in E$ 
such that $d,h \in C$. 
For each non-empty subset $C \subseteq V$, 
let $\mathcal{M}(C)$ denote the set of matchings in $G$ that are consistent 
with $C$. 

Let $\mu$ be a matching in $G$. 
Let $C$ be a non-empty subset of $V$.
We say that \emph{$C$ strongly blocks $\mu$} if there exists 
a matching $\sigma \in \mathcal{M}(C)$
such that,
for every vertex $v \in C$, 
$\sigma(v) \succ_v \mu(v)$. 
We say that \emph{$C$ weakly blocks $\mu$} if there exists 
a matching $\sigma \in \mathcal{M}(C)$ 
satisfying the following conditions. 
\begin{itemize}
\item
For every vertex $v \in C$, 
$\sigma(v) \succsim_v \mu(v)$. 
\item 
There exists a vertex $v \in C$
such that $\sigma(v) \succ_v \mu(v)$. 
\end{itemize}
Furthermore, we say that \emph{$C$ super weakly blocks $\mu$} if there exists 
a matching $\sigma \in \mathcal{M}(C)$ 
satisfying the following conditions. 
\begin{itemize}
\item
For every vertex $v \in C$, 
$\sigma(v) \succsim_v \mu(v)$. 
\item 
There exists a vertex $v \in C$
such that $\sigma(v) \neq \mu(v)$. 
\end{itemize}

\begin{lemma} \label{lemma:non-empty} 
Let $\mu$ be a matching $\mu$.
Let $C$ be a non-empty subset of $V$ 
that super weakly blocks $\mu$. 
Then $C \cap H \neq \emptyset$. 
\end{lemma}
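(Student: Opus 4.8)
The plan is to argue by contradiction: assume $C \cap H = \emptyset$, i.e.\ $C$ consists entirely of doctors, and derive that no vertex of $C$ actually changes its partner, contradicting the second defining condition of super weak blocking. The key structural observation is that $E(C) = \emptyset$ whenever $C$ contains no hospital, since every edge of $G$ joins a doctor and a hospital.

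Concretely, I would first fix a matching $\sigma \in \mathcal{M}(C)$ witnessing that $C$ super weakly blocks $\mu$, so that $\sigma(v) \succsim_v \mu(v)$ for every $v \in C$ and $\sigma(v^\ast) \neq \mu(v^\ast)$ for some $v^\ast \in C$. Assuming $C \cap H = \emptyset$, we have $C \subseteq D$, and since $G$ is bipartite between $D$ and $H$, no edge has both endpoints in $C$; hence $E(C) = \emptyset$. Because $\sigma$ is consistent with $C$, for every doctor $d \in C$ we get $\sigma(d) \subseteq E(C) = \emptyset$, so $\sigma(d) = \emptyset$.

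Next I would use the standium assumption on doctor preferences. For each $d \in C$, from $\sigma(d) \succsim_d \mu(d)$ and $\sigma(d) = \emptyset$ we obtain $\emptyset \succsim_d \mu(d)$. If $\mu(d) \neq \emptyset$, then $\mu(d)$ is an edge $e \in E(d)$, and the hypothesis on $\succsim_d$ states that $\emptyset \not\succsim_d e$, a contradiction. Therefore $\mu(d) = \emptyset = \sigma(d)$ for every $d \in C$, which contradicts $\sigma(v^\ast) \neq \mu(v^\ast)$. Hence $C \cap H \neq \emptyset$.

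There is essentially no hard step here; the only point requiring care is to record explicitly that $E(C) = \emptyset$ when $C \subseteq D$, together with the defining property $\emptyset \not\succsim_d e$ for $e \in E(d)$, which together force every doctor in $C$ to be unmatched in both $\mu$ and $\sigma$. (Note also that completeness of $\succsim_d$ gives reflexivity, so the case $\mu(d) = \emptyset$ causes no issue.)
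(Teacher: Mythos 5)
Your proof is correct and follows essentially the same route as the paper: assume $C\cap H=\emptyset$, note that consistency with $C$ forces $\sigma(d)=\emptyset$ for every $d\in C$, and then use the assumption $\emptyset\not\succsim_d e$ for $e\in E(d)$ to contradict the super weak blocking conditions. The only cosmetic difference is that the paper goes directly to the doctor with $\sigma(d)\neq\mu(d)$, while you first show every doctor in $C$ is unmatched under both matchings; both arguments are equally valid.
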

\begin{proof}
Since $C$ super weakly blocks $\mu$, there exists 
a matching $\sigma \in \mathcal{M}(C)$
such that 
(i)  
$\sigma(v) \succsim_v \mu(v)$
for every vertex $v \in C$, and 
(ii) there exists a vertex $v \in C$
such that $\sigma(v) \neq \mu(v)$. 

Assume that 
$C \cap H = \emptyset$.
Then for every doctor $d \in C \cap D$, 
$\sigma(d) = \emptyset$. 
Let $d$ be a doctor in $C \cap D$ such that 
$\sigma(d) \neq \mu(d)$. 
Since $\sigma(d) = \emptyset$, we have 
$\mu(d) \neq \emptyset$ and 
$\mu(d) \succ_{d} \sigma(d)$. 
However, this contradicts the fact that
$\sigma(d) \succsim_d \mu(d)$.
This completes the proof. 
\end{proof} 

Let $\mathcal{C}$ denote the set of matchings $\mu$ in $G$ such that 
there does not exist a non-empty subset of $V$ that strongly blocks $\mu$. 
Let $\mathcal{C}_S$ denote the set of matchings $\mu$ in $G$ such that 
there does not exist a non-empty subset of $V$ that weakly blocks $\mu$. 
Let $\mathcal{C}_{SS}$ denote the set of matchings $\mu$ in $G$ such that 
there does not exist a non-empty subset of $V$ that super weakly blocks $\mu$. 

Bonifacio, Juarez, Neme, and Oviedo~\cite{BonifacioJNO24}
proved that if the matroids are uniform matroids, then 
$\mathcal{S} \subseteq \mathcal{C}$, 
$\mathcal{SS} = \mathcal{C}_S$, and 
$\mathcal{SSS} = \mathcal{C}_{SS}$. 
We prove that 
these results hold in our setting.

\subsection{Useful lemma} 

The following lemma is used to prove Lemma~\ref{lemma:GabowGK74}.

\begin{lemma}[{Brualdi~\cite{Brualdi69}}] \label{lemma:Brualdi69} 
Let ${\bf M}$ be a matroid, and 
let $B,B^{\prime}$ be distinct bases of 
${\bf M}$. 
Then for every element $e \in B \setminus B^{\prime}$, 
there exists an element $f \in B^{\prime} \setminus B$ such that 
$e \in {\sf C}_{\bf M}(f,B)$
and 
$f \in {\sf C}_{\bf M}(e,B^{\prime})$. 
\end{lemma}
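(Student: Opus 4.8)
The plan is to pin down the desired $f$ inside a single fundamental circuit. Since $e \in B \setminus B'$ and $B'$ is a base, $B' + e \notin \mathcal{I}$, so the fundamental circuit $C := {\sf C}_{{\bf M}}(e, B')$ is well defined, with $e \in C \subseteq B' + e$ and $C - e \subseteq B'$. By the characterisation of fundamental circuits recalled in the excerpt, every $f \in C - e$ satisfies $B' + e - f \in \mathcal{I}$ (indeed a base, since it has the same size as $B'$), i.e.\ $f \in {\sf C}_{{\bf M}}(e, B')$. Hence the second requirement of the lemma holds for \emph{every} $f \in C - e$, and the whole problem reduces to finding some $f \in (C - e) \setminus B$ with $B - e + f \in \mathcal{I}$: such an $f$ automatically lies in $B' \setminus B$ and, because $B$ is a base and $f \notin B$, it satisfies $e \in {\sf C}_{{\bf M}}(f, B)$ as well.

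I would first note that $(C - e) \setminus B \neq \emptyset$: otherwise $C \subseteq B$, which is impossible since $C$ is dependent and $B$ is independent. Then I would argue by contradiction, assuming $B - e + f \notin \mathcal{I}$ for every $f \in (C - e) \setminus B$. As $B - e \in \mathcal{I}$ and $f \notin B - e$, this forces $f \in \mathrm{cl}(B - e)$ for every such $f$, where $\mathrm{cl}$ denotes the closure operator of ${\bf M}$; moreover the elements of $(C - e) \cap B$ lie in $B - e$ and hence in $\mathrm{cl}(B - e)$, so altogether $C - e \subseteq \mathrm{cl}(B - e)$. Because $C$ is a circuit containing $e$, we have $e \in \mathrm{cl}(C - e)$, and then monotonicity and idempotence of the closure operator give $e \in \mathrm{cl}(C - e) \subseteq \mathrm{cl}(B - e)$. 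This contradicts $e \notin \mathrm{cl}(B - e)$, which holds because $B = (B - e) + e$ is independent. So some $f \in (C - e) \setminus B$ works, and the lemma follows.

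If one prefers to stay strictly within the circuit axioms quoted earlier rather than invoke the rank or closure operator, the same contradiction can be reached combinatorially: under the same assumption, each $f \in (C - e) \setminus B$ has a fundamental circuit $D_f := {\sf C}_{{\bf M}}(f, B - e) \subseteq (B - e) + f$ with $e \notin D_f$, and one eliminates the finitely many elements of $(C - e) \setminus B$ from $C$ one at a time via circuit elimination (Lemma~\ref{lemma:Oxley11:lemma_1_1_3}), each time keeping $e$ in the surviving circuit, until a circuit contained in $B$ is produced, again contradicting independence of $B$. I expect the main obstacle to be precisely this route's bookkeeping: Lemma~\ref{lemma:Oxley11:lemma_1_1_3} on its own does not guarantee that $e$ stays in the circuit after each elimination step, so one must either appeal to the strong circuit elimination axiom or check that each intermediate circuit is genuinely distinct from the $D_f$ used against it. This is why I would present the closure-based argument as the main line and relegate the purely axiomatic version to a remark.
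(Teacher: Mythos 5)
Your proof is correct. Note that the paper itself does not prove this lemma --- it is quoted from Brualdi with a citation only --- so there is no in-paper argument to compare against; on its own merits, your reduction is sound: membership $f \in {\sf C}_{\bf M}(e,B')$ is automatic for every $f \in {\sf C}_{\bf M}(e,B') - e$, and the existence of some such $f$ outside $B$ with $B - e + f \in \mathcal{I}$ (equivalently $e \in {\sf C}_{\bf M}(f,B)$) follows from your closure argument, which is the standard proof of the symmetric exchange property: if every $f \in ({\sf C}_{\bf M}(e,B') - e) \setminus B$ lay in $\mathrm{cl}(B-e)$, then ${\sf C}_{\bf M}(e,B') - e \subseteq \mathrm{cl}(B-e)$ would force $e \in \mathrm{cl}(B-e)$, contradicting the independence of $B$. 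You are also right to flag that the purely circuit-axiomatic variant needs strong circuit elimination (or careful bookkeeping) to keep $e$ alive through the elimination steps, so presenting the closure argument as the main line is the right call.
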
 

The following lemma plays an important role in the next section. 
For reader's convenience, we give its proof.  

\begin{lemma}[{Gabow, Glover, and Klingman~\cite{GabowGK74}}] \label{lemma:GabowGK74}
Let ${\bf M}$ be a matroid, and 
let $B,B^{\prime}$ be distinct bases of 
${\bf M}$.
Define $k := |B \setminus B^{\prime}|$.
Then there exist
an ordering $(e_1,e_2,\dots,e_k)$ 
of the elements in $B \setminus B^{\prime}$
and 
an ordering $(f_1,f_2,\dots,f_k)$ 
of the elements in $B^{\prime} \setminus B$
satisfying the following conditions.
\begin{itemize}
\item
For every integer $i \in [k]$, 
we have $e_i \in {\sf C}_{\bf M}(f_i,B)$.  
\item 
For every integer $i \in [k]$, 
$(B^{\prime} \setminus \{f_1,f_2,\dots,f_i\}) \cup \{e_1,e_2,\dots,e_i\}$
is a base of ${\bf M}$. 
\end{itemize}
\end{lemma}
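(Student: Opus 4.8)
The plan is to induct on $k = |B \setminus B'|$, using Brualdi's lemma (Lemma~\ref{lemma:Brualdi69}) to strip off one exchange step at a time and invoking the characterization of fundamental circuits recalled in the preliminaries (for an independent set $I$ and $u \notin I$ with $I+u \notin \mathcal{I}$, the circuit ${\sf C}_{{\bf M}}(u,I)$ is the set of $v \in I+u$ with $I+u-v \in \mathcal{I}$).

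For the base case $k=1$, write $B \setminus B' = \{e_1\}$ and $B' \setminus B = \{f_1\}$. Since $|B| = |B'|$, this forces $B' = (B - e_1) + f_1$, which is independent; hence $B + f_1 - e_1 \in \mathcal{I}$, so by the circuit characterization $e_1 \in {\sf C}_{{\bf M}}(f_1,B)$. Moreover $(B' \setminus \{f_1\}) \cup \{e_1\} = B$ is a base, so both required conditions hold.

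For the inductive step, assume $k \ge 2$ and fix any $e_1 \in B \setminus B'$ (this set is non-empty). Apply Lemma~\ref{lemma:Brualdi69} to $B$, $B'$ and the element $e_1$: there is $f_1 \in B' \setminus B$ with $e_1 \in {\sf C}_{{\bf M}}(f_1,B)$ and $f_1 \in {\sf C}_{{\bf M}}(e_1,B')$. Applying the circuit characterization to $B'$, the relation $f_1 \in {\sf C}_{{\bf M}}(e_1,B')$ gives $B' + e_1 - f_1 \in \mathcal{I}$; since this set has the same cardinality as $B'$, it is a base, which I call $B''$. A routine check shows $B \setminus B'' = (B \setminus B') - e_1$ and $B'' \setminus B = (B' \setminus B) - f_1$, so $|B \setminus B''| = k-1 \ge 1$ and in particular $B \ne B''$. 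By the induction hypothesis applied to $B$ and $B''$, there are orderings $(e_2,\dots,e_k)$ of $B \setminus B''$ and $(f_2,\dots,f_k)$ of $B'' \setminus B$ with $e_i \in {\sf C}_{{\bf M}}(f_i,B)$ for each $i$ and with $(B'' \setminus \{f_2,\dots,f_i\}) \cup \{e_2,\dots,e_i\}$ a base for each $i$. Then $(e_1,\dots,e_k)$ and $(f_1,\dots,f_k)$ order $B \setminus B'$ and $B' \setminus B$; the first condition holds for $i=1$ by the choice of $f_1$ and for $i \ge 2$ by the induction hypothesis; and substituting $B'' = (B' - f_1) + e_1$ turns the set $(B'' \setminus \{f_2,\dots,f_i\}) \cup \{e_2,\dots,e_i\}$ into $(B' \setminus \{f_1,\dots,f_i\}) \cup \{e_1,\dots,e_i\}$ for $i \ge 2$, while for $i=1$ this set is exactly the base $B''$. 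This closes the induction.

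The only substantive point is the inductive step, and its crux is the choice of $f_1$: it must simultaneously lie in the fundamental circuit ${\sf C}_{{\bf M}}(f_1,B)$ alongside $e_1$ (the first required property) \emph{and} make the swap $B' - f_1 + e_1$ a base, so that the symmetric difference shrinks and the induction proceeds while the base condition transfers cleanly from $B''$ back to $B'$. Brualdi's lemma is exactly the tool that produces an $f_1$ fulfilling both demands at once; the rest is bookkeeping to verify that the recovered orderings concatenate correctly.
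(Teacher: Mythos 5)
Your proof is correct and follows essentially the same route as the paper: both arguments peel off one exchange pair at a time via Lemma~\ref{lemma:Brualdi69}, using the fact that $f_1 \in {\sf C}_{\bf M}(e_1,B^{\prime})$ makes $B^{\prime} + e_1 - f_1$ a base while $e_1 \in {\sf C}_{\bf M}(f_1,B)$ supplies the first condition; the paper phrases this as a forward iteration maintaining the intermediate base $B^{\circ}$, whereas you phrase it as induction on $k$ with a recursive call on $(B, B^{\prime\prime})$, but the intermediate bases and the applications of Brualdi's lemma are identical.
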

\begin{proof}
Lemma~\ref{lemma:Brualdi69} implies that 
there exist an element $e_1 \in B \setminus B^{\prime}$ and 
an element $f_1 \in B^{\prime} \setminus B$ 
such that 
$e_1 \in {\sf C}_{\bf M}(f_1,B)$ 
and $B^{\prime} - f_1 + e_1$ is a base of ${\bf M}$. 
Thus, we assume that $k \ge 2$. 

Let $\ell$ be an integer in $[k-1]$. 
We assume that 
there exist 
elements 
$e_1,e_2,\dots,e_{\ell} \in B \setminus B^{\prime}$
and 
$f_1,f_2,\dots,f_{\ell} \in B^{\prime} \setminus B$
such that the orderings
$(e_1,e_2,\dots,e_{\ell})$
and 
$(f_1,f_2,\dots,f_{\ell})$ satisfy the conditions in this lemma
for every integer $i \in [\ell]$. 
Define 
the base $B^{\circ}$ of ${\bf M}$ by 
\begin{equation*}
B^{\circ} := 
(B^{\prime} \setminus \{f_1,f_2,\dots,f_{\ell}\})
\cup 
\{e_1,e_2,\dots,e_{\ell}\}. 
\end{equation*}
Let $x$ be an element in $B \setminus B^{\circ}$. 
Notice that $x \in B \setminus (B^{\prime} \cup \{e_1,e_2,\dots,e_{\ell}\})$. 
Lemma~\ref{lemma:Brualdi69} implies that 
there exists an element $y \in B^{\circ} \setminus B$ such that 
$x \in {\sf C}_{\bf M}(y,B)$ and 
$y \in {\sf C}_{\bf M}(x,B^{\circ})$. 
Notice that 
$y \in B^{\prime} \setminus (B \cup \{f_1,f_2,\dots,f_{\ell}\})$. 
Thus, 
by defining $e_{\ell+1} := x$ and 
$f_{\ell+1} := y$, we can obtain 
elements 
$e_1,e_2,\dots,e_{\ell+1} \in B \setminus B^{\prime}$
and 
$f_1,f_2,\dots,f_{\ell+1} \in B^{\prime} \setminus B$
satisfying the conditions in this lemma. 
This completes the proof. 
\end{proof} 

\section{Results} 

In this section, we give the results of this paper. 
Our proofs are based on the proofs in \cite{BonifacioJNO24}. 

\begin{theorem}
$\mathcal{S} \subseteq \mathcal{C}$.
\end{theorem}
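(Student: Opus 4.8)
The plan is to prove the contrapositive: I assume that a matching $\mu$ is \emph{not} in $\mathcal{C}$, i.e., some non-empty coalition $C \subseteq V$ strongly blocks $\mu$ via a matching $\sigma \in \mathcal{M}(C)$ with $\sigma(v) \succ_v \mu(v)$ for every $v \in C$, and I exhibit an edge of $E \setminus \mu$ that strongly blocks $\mu$, witnessing $\mu \notin \mathcal{S}$. Since $\sigma(v) \succ_v \mu(v)$ for all $v \in C$, Lemma~\ref{lemma:non-empty} gives $C \cap H \neq \emptyset$, and moreover for every $d \in C \cap D$ we have $\sigma(d) \neq \emptyset$ (because $\sigma(d) \succ_d \mu(d) \succsim_d \emptyset$ forces $\sigma(d)$ to be a genuine edge), so each such $d$ is matched under $\sigma$ to some hospital in $C \cap H$; likewise $\sigma(h) \neq \mu(h)$ for every $h \in C \cap H$ since $\sigma(h) \succ_h \mu(h)$.

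The core combinatorial step is to locate, inside some blocking hospital $h \in C \cap H$, a single edge $e = (d,h) \in \sigma(h) \setminus \mu(h)$ that strongly blocks $\mu$ on $h$, and then check it also strongly blocks $\mu$ on $d$. First I would fix $h \in C \cap H$. If $\mu(h) + e \in \mathcal{I}_h$ for some $e \in \sigma(h) \setminus \mu(h)$ (note $\sigma(h) \not\subseteq \mu(h)$, else $\sigma(h) \succsim_h \mu(h)$ would fail by responsiveness — actually $\sigma(h) \subsetneq \mu(h)$ would give $\mu(h) \succ_h \sigma(h)$, contradiction; and $\sigma(h) = \mu(h)$ is excluded), then $e$ strongly blocks $\mu$ on $h$ by the first bullet in the definition of blocking. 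Otherwise $\mu(h) + e \notin \mathcal{I}_h$ for every $e \in \sigma(h) \setminus \mu(h)$; here I would extend $\mu(h)$ to a base $B$ of ${\bf M}_h$ and $\sigma(h)$ to a base $B'$, or work directly with $\mu(h) \cup \sigma(h)$ and repeatedly use the responsiveness comparison: since $\sigma(h) \succ_h \mu(h)$ but $\sigma(h)$ and $\mu(h)$ differ, a suitable exchange argument — applying Lemma~\ref{lemma:GabowGK74} (or Lemma~\ref{lemma:Brualdi69}) to bases extending $\mu(h)$ and $\sigma(h)$, together with responsiveness telling us that summing the pairwise $\succsim_h$ comparisons along the exchange sequence cannot all be ``$\sim$'' — forces the existence of an $e \in \sigma(h) \setminus \mu(h)$ and an $f \in {\sf C}_{{\bf M}_h}(e,\mu(h)) - e = {\sf D}_{{\bf M}_h}(e,\mu(h))$ with $e \succ_h f$, which is exactly the second bullet: $e$ strongly blocks $\mu$ on $h$. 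I expect this exchange-plus-responsiveness argument to be the main obstacle, since one must pick the exchange pair carefully so that $f$ lands in the fundamental circuit of $e$ with respect to $\mu(h)$ specifically (not with respect to an extension), and rule out the degenerate possibility that every single-element improvement is an indifference.

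Having produced $e = (d,h)$ with $d \in \sigma(h)$'s doctor endpoint, it remains to see $e$ strongly blocks $\mu$ on $d$, i.e., $e \succ_d \mu(d)$. Because $\sigma$ is consistent with $C$, $e \in E(C)$ so $d \in C$, and $\sigma(d)$ is the unique edge of $\sigma(d)$. If $\sigma(d) = e$ we are done immediately, since $e = \sigma(d) \succ_d \mu(d)$ by the blocking hypothesis on $d$. The only thing to handle is the case $\sigma(d) \neq e$, which means $d$ is matched under $\sigma$ to a different edge $e'=(d,h')$ with $e' \succ_d \mu(d)$; but then in my selection at $h$ I would instead pick the edge incident to a doctor $d$ with $\sigma(d) = e$ — and I can always do so, because $\sigma(h) \subseteq E(h)$ consists of edges $(d',h)$ with $d' \in C$ and (since $\sigma$ is a matching and $\sigma$-consistency) $\sigma(d') = (d',h)$ for each such $d'$; that is, every edge in $\sigma(h)$ is of the form $\sigma(d')$. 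Hence whichever $e \in \sigma(h) \setminus \mu(h)$ my exchange argument selects automatically satisfies $e = \sigma(d)$ for its doctor endpoint $d$, giving $e \succ_d \mu(d)$ for free. Combining the two halves, $e$ strongly blocks $\mu$, so $\mu \notin \mathcal{S}$, which proves $\mathcal{S} \subseteq \mathcal{C}$.
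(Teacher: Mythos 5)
Your proposal is correct and follows essentially the same route as the paper's proof, just phrased contrapositively: the paper also fixes $h\in C\cap H$, rules out $\mu(h)+e\in\mathcal{I}_h$, restricts to ${\bf M}_h|(\mu(h)\cup\sigma(h))$ so that $\mu(h)$ is a base (which is exactly how it resolves your worry about fundamental circuits being taken with respect to $\mu(h)$ rather than an extension), and applies Lemma~\ref{lemma:GabowGK74} plus responsiveness to chain the exchanges. The only nitpick is your phrase that the comparisons ``cannot all be $\sim$'': the dichotomy you actually need is that if every exchanged pair satisfied $e_i\succsim_h f_i$ the chain would give $\mu(h)\succsim_h B'\succsim_h\sigma(h)$, so some pair must have $f_i\succ_h e_i$, which is precisely the strongly blocking witness.
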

\begin{proof}
Let $\mu$ be a matching in $\mathcal{S}$. 
Assume that $\mu \notin \mathcal{C}$.
Then 
there exist a non-empty 
subset $C \subseteq V$ and 
a matching $\sigma \in \mathcal{M}(C)$
such that, 
for every vertex $v \in C$, 
$\sigma(v) \succ_v \mu(v)$. 
Notice that since $C$ super weakly blocks $\mu$, 
Lemma~\ref{lemma:non-empty} 
implies that 
$C \cap H \neq \emptyset$.

Let $h$ be a hospital in $C \cap H$. 
If $\sigma(h) \subseteq \mu(h)$, then 
$\mu(h) \succsim_h \sigma(h)$. 
This contradicts the fact that $\sigma(h) \succ_h \mu(h)$. 
Thus, $\sigma(h) \setminus \mu(h) \neq \emptyset$.
If there exists an edge $(d,h) \in \sigma(h) \setminus \mu(h)$ such that 
$\mu(h) + (d,h) \in \mathcal{I}_h$, then 
since $d \in C$ (i.e., $\sigma(d) \succ_d \mu(d)$),
$(d,h)$ strongly blocks $\mu$. 
However, this contradicts the fact that $\mu \in \mathcal{S}$. 
Thus, for every edge $e \in \sigma(h) \setminus \mu(h)$, we have 
$\mu(h) + e \notin \mathcal{I}_h$. 
 
Assume that there exist edges $e = (d,h) \in \sigma(h) \setminus \mu(h)$ and 
$f \in {\sf D}_{{\bf M}_h}(e,\mu(h))$ such that we have 
$e \succ_h f$. 
Then since $d \in C$ (i.e., $\sigma(d) \succ_d \mu(d)$), 
$e$ strongly blocks $\mu$. 
Thus, 
for every edge $e \in \sigma(h) \setminus \mu(h)$ and 
every edge $f \in {\sf D}_{{\bf M}_h}(e,\mu(h))$, we have 
$f \succsim_h e$. 

Define ${\bf M}_h^{\prime} := {\bf M}_h | (\mu(h) \cup \sigma(h))$. 
Since $\mu(h) + e \notin \mathcal{I}_h$ 
for every edge $e \in \sigma(h) \setminus \mu(h)$, 
$\mu(h)$ is a base of ${\bf M}_h^{\prime}$. 
Define $B^{\prime}$ as a base of ${\bf M}_h^{\prime}$ such that 
$\sigma(h) \subseteq B^{\prime}$.
Since $\sigma(h) \in \mathcal{I}_h$, 
(I3) guarantees the existence of such a base $B^{\prime}$ of 
${\bf M}^{\prime}_h$. 
Then we have $B^{\prime} \succsim_h \sigma(h)$.  
Lemma~\ref{lemma:GabowGK74} implies that 
there exist
an ordering $(e_1,e_2,\dots,e_k)$ 
of the elements in $\mu(h) \setminus B^{\prime}$
and 
an ordering $(f_1,f_2,\dots,f_k)$ 
of the elements in $B^{\prime} \setminus \mu(h)$
satisfying the following conditions.
\begin{itemize}
\item
For every integer $i \in [k]$, 
we have $e_i \in {\sf C}_{{\bf M}^{\prime}_h}(f_i,\mu(h))$.  
\item 
For every integer $i \in [k]$, 
$B^{\prime}_i := (B^{\prime} \setminus \{f_1,f_2,\dots,f_i\}) \cup \{e_1,e_2,\dots,e_i\}$
is a base of ${\bf M}^{\prime}_h$. 
\end{itemize}
Notice that, for every integer $i \in [k]$,
we have ${\sf C}_{{\bf M}^{\prime}_h}(f_i,\mu(h)) = {\sf C}_{{\bf M}_h}(f_i,\mu(h))$. 
Furthermore, for every integer $i \in [k]$, 
$B^{\prime}_i \in \mathcal{I}_h$.  

For every integer $i \in [k]$, 
since $f_i \in \sigma(h) \setminus \mu(h)$, 
we have $e_i \succsim_h f_i$. 
Define $B^{\prime}_0 := B^{\prime}$. 
Then for every integer $i \in [k]$, 
we have $B^{\prime}_i = B^{\prime}_{i-1} + e_i - f_i$.
Notice that $B^{\prime}_k = \mu(h)$.  
For 
every integer $i \in [k]$, 
since $e_i \succsim_h f_i$, 
$B^{\prime}_i \succsim_h B^{\prime}_{i-1}$. 
Thus, $\mu(h) = B^{\prime}_k \succsim_h B^{\prime}_0 \succsim_h \sigma(h)$. 
However, this contradicts the fact 
that $\sigma(h) \succ_h \mu(h)$. 
This completes the proof.
\end{proof} 

Notice that \cite[Example~1]{BonifacioJNO24} shows that 
$\mathcal{C} \subseteq \mathcal{S}$ does not necessarily 
hold even when, for every hospital $h \in H$, 
${\bf M}_h$ is a uniform matroid.

\begin{theorem} \label{theorem:ss}
$\mathcal{SS} = \mathcal{C}_{S}$.
\end{theorem}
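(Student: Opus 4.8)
The plan is to prove the two inclusions $\mathcal{SS} \subseteq \mathcal{C}_S$ and $\mathcal{C}_S \subseteq \mathcal{SS}$ separately, mirroring the structure of the corresponding argument in \cite{BonifacioJNO24} but replacing capacity counting by matroid exchange. For the inclusion $\mathcal{C}_S \subseteq \mathcal{SS}$, I would argue by contraposition: if $\mu \notin \mathcal{SS}$, then some edge $e = (d,h) \in E \setminus \mu$ weakly blocks $\mu$, and I would build a small blocking coalition $C$ witnessing $\mu \notin \mathcal{C}_S$. The natural candidate is $C = \{d\} \cup \{h\} \cup (\text{the endpoints of a few edges of } \mu(h))$: if $\mu(h) + e \in \mathcal{I}_h$, take $\sigma(h) = \mu(h) + e$ and $C = \{d,h\} \cup \{d' : (d',h) \in \mu(h)\}$, keeping every incumbent doctor's match unchanged; responsiveness (the first bullet is not even needed here, only that adding $e$ to an independent set weakly improves $h$ via $I + e \succsim_h I$, which follows from the first responsiveness axiom applied repeatedly) gives $\sigma(h) \succsim_h \mu(h)$, and $e \succsim_d \mu(d)$ with at least one strict preference (on $d$ or $h$) gives the required strict improvement. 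If instead $\mu(h) + e \notin \mathcal{I}_h$, pick $f \in {\sf D}_{{\bf M}_h}(e,\mu(h))$ with $e \succsim_h f$ (strict on at least one side), set $\sigma(h) = \mu(h) + e - f$, which lies in $\mathcal{I}_h$ by the characterization of the fundamental circuit, and take $C = \{d,h\} \cup \{d' : (d',h) \in \mu(h) - f\}$; the second responsiveness axiom yields $\sigma(h) \succsim_h \mu(h)$, and the displaced doctor is simply excluded from $C$, so consistency with $C$ holds and everyone in $C$ is weakly (somewhere strictly) better off.

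For the converse inclusion $\mathcal{SS} \subseteq \mathcal{C}_S$, I would again use contraposition: suppose $\mu \notin \mathcal{C}_S$, so there is a non-empty $C$ and $\sigma \in \mathcal{M}(C)$ with $\sigma(v) \succsim_v \mu(v)$ for all $v \in C$ and $\sigma(v_0) \succ_{v_0} \mu(v_0)$ for some $v_0 \in C$. By Lemma~\ref{lemma:non-empty}, $C \cap H \neq \emptyset$. The goal is to produce an edge in $E \setminus \mu$ that weakly blocks $\mu$. I would first handle the easy case: if some $d \in C \cap D$ has $\sigma(d) \succ_d \mu(d)$, then writing $\sigma(d) = (d,h)$ with $h \in C$, the edge $(d,h)$ already strongly blocks $\mu$ on $d$; I then need to show it weakly blocks $\mu$ on $h$, which should follow because $\sigma(h) \ni (d,h)$ and $\sigma(h) \succsim_h \mu(h)$, so (if $\mu(h)+(d,h)\notin\mathcal I_h$) there must be an $f \in {\sf D}_{{\bf M}_h}((d,h),\mu(h))$ with $(d,h) \succsim_h f$, for otherwise every fundamental-circuit element is strictly preferred to $(d,h)$ and a Gabow–Glover–Klingman exchange sequence (Lemma~\ref{lemma:GabowGK74}, exactly as in Theorem~1's proof, run inside ${\bf M}_h|(\mu(h)\cup\sigma(h))$ after extending $\sigma(h)$ to a base $B'$ with $B' \succsim_h \sigma(h)$) forces $\mu(h) \succsim_h \sigma(h)$, contradicting $\sigma(h) \succsim_h \mu(h)$ together with the strictness. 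The remaining case is that every doctor in $C$ is indifferent, so the strict improvement happens at a hospital $v_0 = h \in C$: here I repeat the Theorem~1 argument verbatim to derive a contradiction unless some edge $e \in \sigma(h)\setminus\mu(h)$ either has $\mu(h)+e\in\mathcal I_h$ or has a fundamental-circuit witness $f$ with $e \succsim_h f$; in the former subcase $e$ weakly blocks $\mu$ (strongly on $h$, weakly on its doctor $d$ since $\sigma(d)=e\succsim_d\mu(d)$), and in the latter subcase, if additionally $e \succ_h f$ then $e$ weakly blocks, while if $e \sim_h f$ for every such witness then the exchange-sequence computation again collapses $\sigma(h) \succsim_h \mu(h)$ to equality, contradicting $\sigma(h)\succ_h\mu(h)$.

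The main obstacle is the last case: extracting a single weakly blocking edge from a coalition in which all improvement is concentrated at a hospital and all exchange ratios are ties. The subtlety is that $\sigma(h) \succ_h \mu(h)$ strictly, yet along any reduction of $\sigma(h)$ (or its base-extension $B'$) to $\mu(h)$ by single swaps $B'_{i-1} \to B'_{i-1}+e_i-f_i$ we only know $e_i \succsim_h f_i$; if every such swap were a tie we would get $\sigma(h) \sim_h \mu(h)$, a contradiction, so \emph{some} swap must be strict, i.e.\ some $e_i \succ_h f_i$ with $e_i \in \mu(h)\setminus B' \subseteq \mu(h)$ and $f_i \in B'\setminus\mu(h)$ — but this points the wrong way (it gives an edge of $\mu(h)$ preferred to an edge outside, not a blocking edge of $\sigma$). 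The correct fix is to instead run the exchange \emph{from $\mu(h)$ towards a base containing $\sigma(h)$}, i.e.\ reduce $\mu(h)$ to $B'$ via swaps $e_j \leftrightarrow f_j$ with $f_j \in \mu(h)\setminus B'$, $e_j \in B'\setminus\mu(h)$, each satisfying (by the "$f \succsim_h e$ for all circuit elements" hypothesis of that case) $f_j \succsim_h e_j$, hence $B' \preceq_h \mu(h)$; combined with $B' \succsim_h \sigma(h) \succ_h \mu(h)$ this is the desired contradiction, and the place it breaks — forcing existence of a blocking edge — is precisely when the "$f\succsim_h e$ for all circuit elements and all $e\in\sigma(h)\setminus\mu(h)$" hypothesis fails, which is what the case split is engineered to expose. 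I would write this exchange-direction carefully, since getting it backwards is the one genuine trap in the proof.
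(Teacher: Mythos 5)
Your two inclusions and the constructions match the paper's proof almost exactly: for $\mathcal{C}_S \subseteq \mathcal{SS}$ you build the coalition $V(\mu(h)+e)$ or $V(\mu(h)+e-f)$ with $\sigma := \mu+e-\mu(d)$ or $\sigma := \mu+e-f-\mu(d)$, exactly as the paper does; for $\mathcal{SS} \subseteq \mathcal{C}_S$ you split into cases (yours on whether some doctor strictly improves, the paper's on whether some hospital does --- the two splits are complementary and both work) and then run the Gabow--Glover--Klingman exchange inside ${\bf M}_h|(\mu(h)\cup\sigma(h))$ after extending $\sigma(h)$ to a base $B'$.

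The one thing you must not do is implement your last paragraph as written. The ``trap'' you describe is not there, and your proposed ``fix'' is the step that would actually open a gap. In the paper's (and your own earlier) orientation, Lemma~\ref{lemma:GabowGK74} is applied with $B=\mu(h)$ and $B'$ the base containing $\sigma(h)$: it transforms $B'$ into $\mu(h)$ by swaps that remove $f_i \in B'\setminus\mu(h) \subseteq \sigma(h)\setminus\mu(h)$ and insert $e_i \in {\sf C}_{{\bf M}_h}(f_i,\mu(h))$, i.e.\ $e_i \in {\sf D}_{{\bf M}_h}(f_i,\mu(h))$ --- exactly the pairs covered by the hypothesis ``$f \succsim_h e$ for every $e\in\sigma(h)\setminus\mu(h)$ and every $f\in{\sf D}_{{\bf M}_h}(e,\mu(h))$''. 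Each swap is therefore weakly improving for $h$, so $\mu(h)=B'_k \succsim_h B'_0 = B' \succsim_h \sigma(h)$, which already contradicts $\sigma(h)\succ_h\mu(h)$ (or, in the doctor case, the strict swap at $f_i=g$ gives $\mu(h)\succ_h\sigma(h)$ against $\sigma(h)\sim_h\mu(h)$). You never need to ``extract a blocking edge from a strict swap in the chain''; the blocking edge is produced earlier, precisely when the hypothesis fails, and once the hypothesis holds the chain of preferences is the whole contradiction. Your reversed exchange from $\mu(h)$ towards $B'$ would reach the same inequality, but it is \emph{not} licensed by Lemma~\ref{lemma:GabowGK74}: applied with the roles of the bases swapped, the lemma pairs each removed $f_j\in\mu(h)\setminus B'$ with an inserted $e_j\in {\sf C}_{{\bf M}_h}(f_j,B')$ --- a fundamental circuit with respect to $B'$, not with respect to $\mu(h)$ --- so the hypothesis about ${\sf D}_{{\bf M}_h}(e,\mu(h))$ no longer applies to the pairs the lemma hands you. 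Keep the original orientation. A minor further imprecision: in your last case the correct dichotomy is ``some $f\in{\sf D}_{{\bf M}_h}(e,\mu(h))$ with $e\succ_h f$'' versus ``$f\succsim_h e$ for all such $f$'', not ``all ties''; in the latter branch the chain yields $\mu(h)\succsim_h\sigma(h)$, which is all that is needed against $\sigma(h)\succ_h\mu(h)$.
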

\begin{proof}
Let $\mu$ be a matching in $\mathcal{C}_S$. 
Assume that $\mu \notin \mathcal{SS}$. 
Then there exists an edge $e = (d,h) \in E \setminus \mu$
that weakly blocks $\mu$. 
If $\mu(h) + e \in \mathcal{I}_h$, then we can prove that 
$V(\mu(h) + e)$ weakly blocks $\mu$ by setting 
$\sigma := \mu + e - \mu(d)$, where
$V(F)$ denotes the set of end vertices of the edges in $F$ for 
each subset $F \subseteq E$. 
Thus, we can assume that 
$\mu(h) + e \notin \mathcal{I}_h$. 
First, we consider the case where 
$e \sim \mu(d)$. 
In this case, since $e$ weakly blocks 
$\mu$, 
there exists an edge $f \in {\sf D}_{{\bf M}_h}(e,\mu(h))$ 
such that 
$e \succ_h f$. 
Notice that 
$\mu(h) + e - f \succ_h \mu(h)$. 
Thus, we can prove that 
$V(\mu(h) + e - f)$ weakly blocks $\mu$ by setting 
$\sigma := \mu + e - f - \mu(d)$. 
Next, we consider the case where 
$e \succ \mu(d)$. 
In this case, since $e$ weakly blocks 
$\mu$, 
there exists an edge $f \in {\sf D}_{{\bf M}_h}(e,\mu(h))$ 
such that 
$e \succsim_h f$. 
Notice that 
$\mu(h) + e - f \succsim_h \mu(h)$. 
Thus, we can prove that 
$V(\mu(h) + e - f)$ weakly blocks $\mu$ by setting 
$\sigma := \mu + e - f - \mu(d)$. 
This is a contradiction. 

Let $\mu$ be a matching in $\mathcal{SS}$. 
Assume that $\mu \notin \mathcal{C}_S$. 
Then there exist a non-empty subset $C \subseteq V$ and 
a matching $\sigma \in \mathcal{M}(C)$
satisfying the following conditions. 
\begin{itemize}
\item
For every vertex $v \in C$, 
$\sigma(v) \succsim_v \mu(v)$. 
\item 
There exists a vertex $v \in C$
such that $\sigma(v) \succ_v \mu(v)$. 
\end{itemize}
Since $C$ super weakly blocks $\mu$, 
Lemma~\ref{lemma:non-empty} 
implies that 
$C \cap H \neq \emptyset$.

First, we consider the case where there exists a hospital $h \in C \cap H$ such that 
$\sigma(h) \succ_h \mu(h)$. 
If $\sigma(h) \subseteq \mu(h)$, then 
$\mu(h) \succsim_h \sigma(h)$. 
However, this contradicts the fact that $\sigma(h) \succ_h \mu(h)$. 
Thus, $\sigma(h) \setminus \mu(h) \neq \emptyset$.
If there exists an edge $(d,h) \in \sigma(h) \setminus \mu(h)$ such that 
$\mu(h) + (d,h) \in \mathcal{I}_h$, then 
since $d \in C$ (i.e., $\sigma(d) \succsim_d \mu(d)$),
$(d,h)$ weakly blocks $\mu$. 
This contradicts the fact that $\mu \in \mathcal{SS}$. 
Thus, for every edge $e \in \sigma(h) \setminus \mu(h)$, we have 
$\mu(h) + e \notin \mathcal{I}_h$. 

Assume that there exist edges $e = (d,h) \in \sigma(h) \setminus \mu(h)$ and 
$f \in {\sf D}_{{\bf M}_h}(e,\mu(h))$ such that  
we have 
$e \succ_h f$. 
Then since $d \in C$ (i.e., $\sigma(d) \succsim_d \mu(d)$), 
$e$ weakly blocks $\mu$. 
Thus, 
for every edge $e \in \sigma(h) \setminus \mu(h)$ and 
every edge $f \in {\sf D}_{{\bf M}_h}(e,\mu(h))$, we have 
$f \succsim_h e$. 

Define ${\bf M}_h^{\prime} := {\bf M}_h | (\mu(h) \cup \sigma(h))$. 
Then since $\mu(h) + e \notin \mathcal{I}_h$ 
for every edge $e \in \sigma(h) \setminus \mu(h)$, 
$\mu(h)$ is a base of ${\bf M}_h^{\prime}$. 
Define $B^{\prime}$ as a base of ${\bf M}_h^{\prime}$ such that 
$\sigma(h) \subseteq B^{\prime}$.
Then we have $B^{\prime} \succsim_h \sigma(h)$.  
Lemma~\ref{lemma:GabowGK74} implies that 
there exist
an ordering $(e_1,e_2,\dots,e_k)$ 
of the elements in $\mu(h) \setminus B^{\prime}$
and 
an ordering $(f_1,f_2,\dots,f_k)$ 
of the elements in $B^{\prime} \setminus \mu(h)$
satisfying the following conditions.
\begin{itemize}
\item
For every integer $i \in [k]$, 
we have $e_i \in {\sf C}_{{\bf M}^{\prime}_h}(f_i,\mu(h)) = {\sf C}_{{\bf M}_h}(f_i,\mu(h))$.  
\item 
For every integer $i \in [k]$, 
$B^{\prime}_i := (B^{\prime} \setminus \{f_1,f_2,\dots,f_i\}) \cup \{e_1,e_2,\dots,e_i\}
\in \mathcal{I}_h$. 
\end{itemize}

For every integer $i \in [k]$, 
since $f_i \in \sigma(h) \setminus \mu(h)$, 
we have $e_i \succsim_h f_i$. 
Define $B^{\prime}_0 := B^{\prime}$. 
Then for every integer $i \in [k]$, 
we have $B^{\prime}_i = B^{\prime}_{i-1} + e_i - f_i$.
Notice that $B^{\prime}_k = \mu(h)$.  
For 
every integer $i \in [k]$, 
since $e_i \succsim_h f_i$, 
$B^{\prime}_i \succsim_h B^{\prime}_{i-1}$. 
Thus, $\mu(h) = B^{\prime}_k \succsim_h B^{\prime}_0 \succsim_h \sigma(h)$. 
However, this contradicts the fact 
that $\sigma(h) \succ_h \mu(h)$. 

Next, we assume that 
$\sigma(h) \sim_h \mu(h)$ for every hospital $h \in C \cap H$. 
In this case, there exists a doctor $d \in C \cap D$ such that 
$\sigma(d) \succ_d \mu(d)$.
Let $g = (d,h)$ denote $\sigma(d)$. 
If there exists an edge $e \in \sigma(h) \setminus \mu(h)$ such that 
$\mu(h) + e \in \mathcal{I}_h$, 
then $e$ weakly blocks $\mu$. 
Thus, 
$\mu(h) + e \notin \mathcal{I}_h$
for every edge $e \in \sigma(h) \setminus \mu(h)$.
If there exists an edge $f \in {\sf D}_{{\bf M}_h}(g,\mu(h))$ such that 
$g \succsim_h f$, then 
$g$ weakly blocks $\mu$. 
Thus, $f \succ_h g$ for every edge $f \in {\sf D}_{{\bf M}_h}(g,\mu(h))$. 
Furthermore, if there exist an edge $e \in \sigma(h) \setminus (\mu(h)+g)$ and 
an edge $f \in {\sf D}_{{\bf M}_h}(e,\mu(h))$ such that 
$e \succ_h f$, then 
$e$ weakly blocks $\mu$. 
Thus, $f \succsim_h e$ for 
every edge $e \in \sigma(h) \setminus (\mu(h)+g)$ and 
every edge $f \in {\sf D}_{{\bf M}_h}(e,\mu(h))$. 

Define ${\bf M}_h^{\prime} := {\bf M}_h | (\mu(h) \cup \sigma(h))$. 
Then since $\mu(h) + e \notin \mathcal{I}_h$
for every edge $e \in \sigma(h) \setminus \mu(h)$, 
$\mu(h)$ is a base of ${\bf M}_h^{\prime}$. 
Define $B^{\prime}$ as a base of ${\bf M}_h^{\prime}$ such that 
$\sigma(h) \subseteq B^{\prime}$.
Then we have $B^{\prime} \succsim_h \sigma(h)$.  
Lemma~\ref{lemma:GabowGK74} implies that 
there exist
an ordering $(e_1,e_2,\dots,e_k)$ 
of the elements in $\mu(h) \setminus B^{\prime}$
and 
an ordering $(f_1,f_2,\dots,f_k)$ 
of the elements in $B^{\prime} \setminus \mu(h)$
satisfying the following conditions.
\begin{itemize}
\item
For every integer $i \in [k]$, 
we have $e_i \in {\sf C}_{{\bf M}^{\prime}_h}(f_i,\mu(h)) = {\sf C}_{{\bf M}_h}(f_i,\mu(h))$.  
\item 
For every integer $i \in [k]$, 
$B^{\prime}_i := (B^{\prime} \setminus \{f_1,f_2,\dots,f_i\}) \cup \{e_1,e_2,\dots,e_i\} \in \mathcal{I}_h$.
\end{itemize}

For every integer $i \in [k]$, 
since $f_i \in \sigma(h) \setminus \mu(h)$, 
we have $e_i \succsim_h f_i$. 
Define $B^{\prime}_0 := B^{\prime}$. 
Then for every integer $i \in [k]$, 
we have $B^{\prime}_i = B^{\prime}_{i-1} + e_i - f_i$.
Notice that $B^{\prime}_k = \mu(h)$.  
For 
every integer $i \in [k]$, 
since $e_i \succsim_h f_i$, 
$B^{\prime}_i \succsim_h B^{\prime}_{i-1}$. 
Furthermore, if $f_i = g$, then 
$e_i \succ_h f_i$ and $B^{\prime}_i \succ_h B^{\prime}_{i-1}$. 
Thus, $\mu(h) = B^{\prime}_k \succ_h B^{\prime}_0 \succsim_h \sigma(h)$. 
However, this contradicts the fact 
that $\sigma(h) \sim_h \mu(h)$. 
\end{proof} 

\begin{theorem} \label{theorem:sss} 
$\mathcal{SSS} = \mathcal{C}_{SS}$.
\end{theorem}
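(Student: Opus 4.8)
The plan is to establish the two inclusions $\mathcal{C}_{SS}\subseteq\mathcal{SSS}$ and $\mathcal{SSS}\subseteq\mathcal{C}_{SS}$ separately, following the pattern of the proof of Theorem~\ref{theorem:ss}. For $\mathcal{C}_{SS}\subseteq\mathcal{SSS}$ I argue by contraposition: assume $\mu\notin\mathcal{SSS}$, so some edge $e=(d,h)\in E\setminus\mu$ super weakly blocks $\mu$; thus $e\succsim_d\mu(d)$, and either $\mu(h)+e\in\mathcal{I}_h$, or $\mu(h)+e\notin\mathcal{I}_h$ and $e\succsim_h f$ for some $f\in{\sf D}_{{\bf M}_h}(e,\mu(h))$ (so that $f\in\mu(h)$ and $\mu(h)+e-f\in\mathcal{I}_h$). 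In the first case I put $\sigma:=\mu+e-\mu(d)$ and $C:=V(\mu(h)+e)$, and in the second $\sigma:=\mu+e-f-\mu(d)$ and $C:=V(\mu(h)+e-f)$, where $V(F)$ denotes the set of endpoints of the edges in $F$. A routine check shows that $\sigma$ is a matching with $\sigma\in\mathcal{M}(C)$, that $\sigma(d)=e\succsim_d\mu(d)$, that $\sigma(v)=\mu(v)$ for every other $v\in C$, and that $\sigma(h)$ equals $\mu(h)+e$ or $\mu(h)+e-f$, which by responsiveness of $\succsim_h$ satisfies $\sigma(h)\succsim_h\mu(h)$ and differs from $\mu(h)$. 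Hence $h$ witnesses that $C$ super weakly blocks $\mu$, contradicting $\mu\in\mathcal{C}_{SS}$.

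For $\mathcal{SSS}\subseteq\mathcal{C}_{SS}$, let $\mu\in\mathcal{SSS}$ and suppose some non-empty $C\subseteq V$ super weakly blocks $\mu$ via $\sigma\in\mathcal{M}(C)$; by Lemma~\ref{lemma:non-empty}, $C\cap H\neq\emptyset$. The first step is to find a hospital $h\in C\cap H$ with $\sigma(h)\neq\mu(h)$: there is $v\in C$ with $\sigma(v)\neq\mu(v)$, and if $v$ is a doctor $d$ then $\sigma(d)\succsim_d\mu(d)$ forces $\sigma(d)\neq\emptyset$, so writing $\sigma(d)=(d,h)$ with $h\in C$ (by consistency of $\sigma$ with $C$) we get $\sigma(d)\in\sigma(h)$ and $\sigma(d)\notin\mu$ (else $\sigma(d)=\mu(d)$), hence $\sigma(d)\in\sigma(h)\setminus\mu(h)$ and $\sigma(h)\neq\mu(h)$. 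Fix such an $h$. From $\sigma(h)\succsim_h\mu(h)$, $\sigma(h)\neq\mu(h)$ and responsiveness it follows that $\sigma(h)\setminus\mu(h)\neq\emptyset$. Now for every $e=(d',h)\in\sigma(h)\setminus\mu(h)$ we have $e=\sigma(d')$, $d'\in C$ and $e\succsim_{d'}\mu(d')$; if $\mu(h)+e\in\mathcal{I}_h$, or if $e\succsim_h f$ for some $f\in{\sf D}_{{\bf M}_h}(e,\mu(h))$, then $e$ super weakly blocks $\mu$, contradicting $\mu\in\mathcal{SSS}$. Therefore $\mu(h)+e\notin\mathcal{I}_h$ for every such $e$, and $f\succ_h e$ for every $e\in\sigma(h)\setminus\mu(h)$ and every $f\in{\sf D}_{{\bf M}_h}(e,\mu(h))$.

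It remains to run the matroid-exchange argument exactly as in the proof of Theorem~\ref{theorem:ss}. In ${\bf M}_h':={\bf M}_h|(\mu(h)\cup\sigma(h))$ the set $\mu(h)$ is a base, and $\sigma(h)$ extends to a base $B'$ of ${\bf M}_h'$ with $B'\succsim_h\sigma(h)$. Applying Lemma~\ref{lemma:GabowGK74} to $\mu(h)$ and $B'$ yields orderings $(e_1,\dots,e_k)$ of $\mu(h)\setminus B'$ and $(f_1,\dots,f_k)$ of $B'\setminus\mu(h)$ with $e_i\in{\sf C}_{{\bf M}_h'}(f_i,\mu(h))={\sf C}_{{\bf M}_h}(f_i,\mu(h))$ and each $B'_i:=(B'\setminus\{f_1,\dots,f_i\})\cup\{e_1,\dots,e_i\}$ independent in ${\bf M}_h$. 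Since $f_i\in\sigma(h)\setminus\mu(h)$ and $e_i\in{\sf D}_{{\bf M}_h}(f_i,\mu(h))$, the property derived above gives $e_i\succ_h f_i$, so with $B'_0:=B'$ and $B'_i=B'_{i-1}+e_i-f_i$, responsiveness gives $B'_i\succ_h B'_{i-1}$ for every $i\in[k]$. As $B'_k=\mu(h)$ and $k\geq1$ (because $\sigma(h)\setminus\mu(h)\neq\emptyset$), we obtain $\mu(h)\succ_h B'\succsim_h\sigma(h)$, i.e.\ $\sigma(h)\not\succsim_h\mu(h)$, contradicting that $C$ super weakly blocks $\mu$. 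This completes the proof.

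The step I expect to be the main obstacle is this last one, together with the observation that, unlike in the proof of Theorem~\ref{theorem:ss}, no case distinction on whether $h$ strictly prefers $\sigma(h)$ to $\mu(h)$ is needed here: the inequality $f\succ_h e$ holds at \emph{every} exchange, so each $B'_i$ is strictly better than $B'_{i-1}$ and the contradiction $\mu(h)\succ_h\sigma(h)$ emerges uniformly, whether $\sigma(h)\succ_h\mu(h)$ or $\sigma(h)\sim_h\mu(h)$. The bookkeeping in the first inclusion — verifying that the modified matching $\sigma$ lies in $\mathcal{M}(C)$ and that deleting $f$ and $\mu(d)$ affects no vertex of $C$ other than the intended ones — is routine but needs a little care.
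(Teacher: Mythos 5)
Your proposal is correct and follows essentially the same route as the paper's own proof: the same contraposition with $\sigma:=\mu+e-\mu(d)$ or $\mu+e-f-\mu(d)$ for one inclusion, and for the other the same identification of a hospital $h\in C\cap H$ with $\sigma(h)\neq\mu(h)$, the derivation of $f\succ_h e$ for all $e\in\sigma(h)\setminus\mu(h)$ and $f\in{\sf D}_{{\bf M}_h}(e,\mu(h))$, and the exchange argument via Lemma~\ref{lemma:GabowGK74} yielding $\mu(h)\succ_h\sigma(h)$. Your observation that no case split on $\sigma(h)\succ_h\mu(h)$ versus $\sigma(h)\sim_h\mu(h)$ is needed (unlike Theorem~\ref{theorem:ss}) matches the paper exactly.
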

\begin{proof}
Let $\mu$ be a matching in $\mathcal{C}_{SS}$. 
Assume that $\mu \notin \mathcal{SSS}$. 
Then there exists an edge $e = (d,h) \in E \setminus \mu$
that super weakly blocks $\mu$. 
If $\mu(h) + e \in \mathcal{I}_h$, then we can prove that 
$V(\mu(h) + e)$ super weakly blocks $\mu$ by setting 
$\sigma := \mu + e - \mu(d)$.
Assume that 
$\mu(h) + e \notin \mathcal{I}_h$. 
Then since $e$ super weakly blocks 
$\mu$, 
there exists an edge $f \in {\sf D}_{{\bf M}_h}(e,\mu(h))$ 
such that 
$e \succsim_h f$. 
Notice that 
$\mu(h) + e - f \succsim_h \mu(h)$. 
Thus, we can prove that 
$V(\mu(h) + e - f)$ super weakly blocks $\mu$ by setting 
$\sigma := \mu + e - f - \mu(d)$. 
This is a contradiction. 

Let $\mu$ be a matching in $\mathcal{SSS}$. 
Assume that $\mu \notin \mathcal{C}_{SS}$. 
Then there exist a non-empty subset $C \subseteq V$ and 
a matching $\sigma \in \mathcal{M}(C)$
satisfying the following conditions. 
\begin{itemize}
\item
For every vertex $v \in C$, 
$\sigma(v) \succsim_v \mu(v)$. 
\item 
There exists a vertex $v \in C$
such that $\sigma(v) \neq \mu(v)$. 
\end{itemize}
Lemma~\ref{lemma:non-empty} 
implies that 
$C \cap H \neq \emptyset$.

First, we prove that there exists a hospital $h \in C \cap H$ such that 
$\sigma(h) \neq \mu(h)$. 
Let $v$ be a vertex in $C$ such that 
$\sigma(v) \neq \mu(v)$. 
If $v \in H$, then the proof is done. 
Assume that $v \in D$. 
Since $\sigma(v) \succsim_v \mu(v)$, 
$\sigma(v) \neq \emptyset$. 
Let $g = (v,h)$ denote $\sigma(v)$. 
Then $h \in C$. 
Since 
$\sigma(v) \neq \mu(v)$, 
we have $g \in \sigma(h)$ and $g \notin \mu(h)$.
Thus, $\sigma(h) \neq \mu(h)$.

Let $h$ be a hospital in $C \cap H$
such that $\sigma(h) \neq \mu(h)$.
If $\sigma(h) \subsetneq \mu(h)$, then 
$\mu(h) \succ_h \sigma(h)$. 
However, this contradicts the fact that $\sigma(h) \succsim_h \mu(h)$. 
Thus, $\sigma(h) \setminus \mu(h) \neq \emptyset$.
If there exists an edge $(d,h) \in \sigma(h) \setminus \mu(h)$ such that 
$\mu(h) + (d,h) \in \mathcal{I}_h$, then 
since $d \in C$ (i.e., $\sigma(d) \succsim_d \mu(d)$), 
$e$ super weakly blocks $\mu$. 
However, this contradicts the fact that $\mu \in \mathcal{SSS}$.
Thus, for every edge $e \in \sigma(h) \setminus \mu(h)$,  
we have $\mu(h) + e \notin \mathcal{I}_h$. 

Assume that there exist edges $e = (d,h) \in \sigma(h) \setminus \mu(h)$ and 
$f \in {\sf D}_{{\bf M}_h}(e,\mu(h))$ such that we have 
$e \succsim_h f$. 
Then since $d \in C$ (i.e., $\sigma(d) \succsim_d \mu(d)$), 
$e$ super weakly blocks $\mu$. 
Thus, 
for every edge $e = (d,h) \in \sigma(h) \setminus \mu(h)$ and 
every edge $f \in {\sf D}_{{\bf M}_h}(e,\mu(h))$, we have 
$f \succ_h e$. 

Define ${\bf M}_h^{\prime} := {\bf M}_h | (\mu(h) \cup \sigma(h))$. 
Then since $\mu(h) + e \notin \mathcal{I}_h$
for every edge $e \in \sigma(h) \setminus \mu(h)$, 
$\mu(h)$ is a base of ${\bf M}_h^{\prime}$. 
Define $B^{\prime}$ as a base of ${\bf M}_h^{\prime}$ such that 
$\sigma(h) \subseteq B^{\prime}$.
Then we have $B^{\prime} \succsim_h \sigma(h)$.  
Lemma~\ref{lemma:GabowGK74} implies that 
there exist
an ordering $(e_1,e_2,\dots,e_k)$ 
of the elements in $\mu(h) \setminus B^{\prime}$
and 
an ordering $(f_1,f_2,\dots,f_k)$ 
of the elements in $B^{\prime} \setminus \mu(h)$
satisfying the following conditions.
\begin{itemize}
\item
For every integer $i \in [k]$, 
we have $e_i \in {\sf C}_{{\bf M}^{\prime}_h}(f_i,\mu(h)) = {\sf C}_{{\bf M}_h}(f_i,\mu(h))$.  
\item 
For every integer $i \in [k]$, 
$B^{\prime}_i := (B^{\prime} \setminus \{f_1,f_2,\dots,f_i\}) \cup \{e_1,e_2,\dots,e_i\}
\in \mathcal{I}_h$.
\end{itemize}

For 
every integer $i \in [k]$, 
since $f_i \in \sigma(h) \setminus \mu(h)$, 
$e_i \succ_h f_i$. 
Define $B^{\prime}_0 := B^{\prime}$. 
For 
every integer $i \in [k]$, 
since $e_i \succ_h f_i$, 
$B^{\prime}_i \succ_h B^{\prime}_{i-1}$. 
Thus, $\mu(h) = B^{\prime}_k \succ_h B^{\prime}_0 \succsim_h \sigma(h)$. 
However, this contradicts the fact 
that $\sigma(h) \succsim_h \mu(h)$. 
This completes the proof.
\end{proof} 

Finally, we consider algorithmic implications of 
our results. 
Theorems~\ref{theorem:ss} and \ref{theorem:sss} imply that 
if we can determine whether $\mathcal{SS} \neq \emptyset$ and 
$\mathcal{SSS} \neq \emptyset$, then we can determine 
whether $\mathcal{C}_S \neq \emptyset$ and 
$\mathcal{C}_{SS} \neq \emptyset$. 
It is known that, under the assumption that we are given independence 
oracles for the matroids, we can determine 
whether $\mathcal{SS} \neq \emptyset$ and 
$\mathcal{SSS} \neq \emptyset$
in polynomial time~\cite{Kamiyama22,Kamiyama22b}. 
Thus, under the same assumption, 
we can determine 
whether $\mathcal{C}_S \neq \emptyset$ and 
$\mathcal{C}_{SS} \neq \emptyset$ in polynomial time. 

\bibliographystyle{plain}
\bibliography{core_matroid_bib}

\end{document}